\newif\ifpaper
\newcommand{\FigureWidth}{8.5cm} 
\newcommand{\FigureWidth}{17cm}  
\newcommand{\TwoOneColumnAlternate}[2]{#1}
\newcommand{\TwoOneColumnAlternate}[2]{#2}
\newtheorem{lemma}{Lemma}
\newcommand\blfootnote[1]{%
  \begingroup
  \renewcommand\thefootnote{}\footnote{#1}%
  \addtocounter{footnote}{-1}%
  \endgroup
}
\begin{document}

\newcommand{\argmax}{\mathop{\mathrm{argmax}}} 
\newcommand{\argmin}{\mathop{\mathrm{argmin}}}
\newcommand{\diag}{\mathop{\mathrm{diag}}}
\newcommand{\tr}{\mathop{\mathrm{tr}}}
\newcommand{\ZF}{\mathop{\mathrm{ZF}}}
\newcommand{\tx}{\mathop{\mathrm{tx}}}
\title{ Optimal and Suboptimal Routing   Based on Partial CSI in Random Ad-hoc Networks}
 \author{\IEEEauthorblockN{Yiftach Richter and Itsik Bergel}\\
}

\maketitle
\begin{abstract}
In this paper we consider routing in random 
wireless-adhoc-networks (WANETs), where each node is equipped with a single antenna. 
Our analysis uses a proper model of the physical layer together with an abstraction of higher
communication layers. 
We assume that the nodes
are distributed according to a Poisson-point-process and consider
routing schemes that select the next relay based on the
geographical locations, the channel gains of its neighbor nodes and the statistical characterization of all other nodes.  While many routing problems
are formulated as optimization problems, the optimal distributed
solution is rarely accessible. In this work, we present the exact
optimal solution for the  scenario analyzed. The optimal routing is
given as a maximization of a routing metric which  depends solely
on the known partial channel state information (CSI) and includes an expectation
with respect to  the interference statistics. The optimal routing scheme is
important because it gives an upper bound on the performance of
any other routing scheme. We
also present sub-optimal routing schemes that only use  part of
the available knowledge and require much lower computational
complexity. 
Numerical results demonstrate
that the performance of the low complexity schemes is close to
optimal and outperforms other tested routing schemes.
\end{abstract}





\section{Introduction}
\blfootnote{
The authors are with the Faculty of Engineering, Bar-Ilan University, 52900
Ramat-Gan, Israel (e-mail: richtey@biu.ac.il; Itsik.Bergel@biu.ac.il).
\par
Preliminary results were published in conference proceedings: in the IEEE $16$-th International Workshop on Signal Processing Advances in Wireless Communications ({SPAWC}) $2015$, and in the IEEE $28$-th Convention of Electrical and Electronics Engineers in Israel ({IEEEI}) $2014$.

}
Wireless ad-hoc networks (WANETs) have become very popular due to their  flexibility and scalability. In order to ensure reliable communication without 
any infrastructure, WANETs are based on  cooperation among nodes. In such networks, 
direct communication is usually undesirable, primarily because of the high energy consumption and the  interference on neighbor nodes. Instead, WANETs commonly employ  multihop routing in which  messages are delivered from sources to destinations using intermediate nodes that operate as relays.
\par
{{ 
Over the years, the design and analysis of routing algorithms has  attracted a significant research attention (see for example \cite{royer1999review,abolhasan2004review} and references therein).
Considering multihop routing optimization, the routing should
take into account both  the networking parameters together with several cross-layer parameters (e.g., channel states,
transmission powers, 
transmission probability, etc).
In particular, most wireless networks operate over time-varying channels, where each node experiences varying channel gains between itself and its neighbors.
The performance of WANETs heavily depend on these channel states. 
\par
Only  few works have analyzed the  combination  of physical layer parameters (channel status, transmission power etc) and networking parameters. While there have been several suggestions in this direction, so far there is much uncertainty about the optimal way to weigh the different parameters in the routing decision.
Hence, this has remained one of the main open issues in routing for wireless networks.  
\par
It is also important to define what  physical measurements  are available for each node. In general, it is not realistic to assume that each node has full channel state information (CSI); I.e., knowledge of  
the locations of all the other nodes and the channel gains between each two nodes. In this paper we assume a more practical `partial' CSI scenario, in which, each node can acquire the locations of its neighboring nodes and the channel gains between
itself and each of its neighbors. 
}}
\par
{{
Analysis of random WANETs allows
a study of the behavior of a WANET over many spatial realizations
in which the nodes are placed according to some
probability distribution.}}
 Perhaps the most  popular modeling of  random WANETs uses    Poisson point processes (PPPs) (e.g. \cite{baccelli2010stochastic,haenggi2009stochastic}).
To further simplify the analysis, most works  using PPP modeling for routing have considered \textit{geographic routing} (e.g., \cite{baccelli2008performance}\nocite{nardelli2012efficiency}--\cite{torrieri2013multihop}). In geographic routing,  the
next hop is selected on the basis of  knowledge about the geographical locations of
the potential relays, and the routing  decision at each hop is independent of all other  decisions.   
\par
{{Taking into consideration the instantaneous CSI knowledge, the performance of the routing algorithms  can be significantly improved.
This type of algorithm is known as \textit{opportunistic relaying}. This paper considers the specific case of geographic routing together with opportunistic relaying. This is a challenging case, as the  location information and channel information are hard to combine.}}
\par
Weber et al. \cite{weber2008longest} suggested using  opportunistic relaying by first identifying a set of relays for which the received power
is larger than a certain threshold, and then selecting the farthest
node in the set as the next relay. Assuming that each node has
many messages in its buffer, the algorithm searches for a
message in the buffer that can be routed via the selected relay
node. 
Hence, all routing in the network is performed over good links.
 Zanella et al. \cite{zanella2013impact}
 investigated the impact of various routing algorithms on the
generated interference, as well as the trade-off between the
average number of hops and the generated interference. They
compared  simple geographic routing algorithms and
opportunistic relaying algorithms and proved the superiority
of opportunistic relaying algorithms. 
\par
In traditional transmission schemes, all packets have the same length and use the same code rate. Thus,  successful decoding can only take place with high probability  if the received  signal quality (i.e., the signal to  interference plus noise  ratio - SINR) is above a threshold.  If the receiver cannot decode the message,  the transmission is said to be in `outage'.
In this  case, all the received bits are discarded and the transmitter retransmits the same data until it is decoded.  Obviously, the discarding of undetected data is a waste of resources, and the network throughput is not optimal. Furthermore, to ensure a low outage probability, the transmitters must use a relatively low code rate, which leads to further reductions in throughput.
\par
In this work we consider the ergodic rate, which is the maximal achievable rate in such links and can be significantly higher than the outage rate. The ergodic rate  is given by the mutual information between the transmitted signal and the received signal (given the interferers' activity). The ergodic rate is always higher than the outage rate, but in some cases it may require a large delay.
\par
The achievability  of the ergodic rate in systems with time-diversity or frequency diversity is discussed in detail in the literature (e.g., 
\cite{biglieri1998fading}\nocite{ERD_LB_2013_Yaniv}--\cite{george2017ergodic}). With a sufficient delay, the ergodic rate can be achieved with no outage. Rajanna et al. \cite{rajanna2015performance} showed that the ergodic rate can be approached  with limited delay, by allowing a small outage probability.
\par
The achievability of the ergodic rate  can be illustrated through the example of the Hybrid Automatic Repeat Request (HARQ)  (e.g., 
\cite{larmo2009lte,dahlman20103g}). In HARQ, when the receiver cannot decode successfully, it stores the received signal in memory and waits for  transmissions of additional code bits from the transmitter.  Note that even if the received
signal has led to a detection failure, it still contains useful information about
the transmitted packet which will be useful later for detection. The transmission of additional code bits continues until the receiver is able to decode the packet. Hence, each message will eventually be decoded, and  the data are transferred at a rate which is very close to the maximal rate for this link (the ergodic rate).
\par
The HARQ protocol is commonly used in modern communication systems such as HSPA and LTE. It makes it possible to improve the communication reliability at rates that can lead to a significant outage probability. Note that while HARQ can work with no outage at all, its actual   transmission rate for each package is not determined in advance and depends on the channel and network conditions. If the HARQ protocol is tuned to start with a high initial rate, and adds a small number of code bits on each try, the average data rate over all packages will be very close to the ergodic rate.
\par   
Conveniently, the ergodic rate is also easier for analysis.
The network performance measure  associated with the ergodic rate is termed the Asymptotic-Density-of-Rate-Progress (ADORP) \cite{Richter2014_IEEEI_SO}. The ADORP measures the average density of the product of the rate and the distance of each transmission. This gives a good indication of the capability of the network to deliver messages from sources to destinations (see more details in \cite{Richter2014_IEEEI_SO}). The ADORP is equivalent to the transport-capacity
\cite{andrews2010random} except for the use of the ergodic rate instead of the outage rate.
\par
Although \cite{Richter2014_IEEEI_SO} aimed to derive a routing metric that optimizes the network throughput in single antenna WANETs, the actual derivation ignored some of the data that are known in advance at the transmitting node. 
In this paper,  we present    the routing metric that  maximizes the WANET throughput. 
\par
All previous works have presented a routing metric, and
then evaluated its performance. In this paper we take a more direct approach, and  derive the exact routing function that maximizes the ADORP for single antenna WANETs. As this routing scheme is optimal with respect to the (known) network statistics, we term it  Statistically-Optimal (SO) routing. We also present three sub-optimal, low-complexity routing schemes that can be evaluated in a closed form. Finally, we show that the routing scheme of \cite{Richter2014_IEEEI_SO} (termed here narrow-bound-optimal (NBO) routing) can be viewed as an approximation of the SO routing \cite{richter2015optimal}. Thus, we can compare the performance of all the routing schemes; we show  that the NBO routing scheme is very close to the optimal routing (given only local information at each transmitting node).
\par
The simplicity of the NBO scheme, together with its near optimality make it a good candidate for routing in practical single antenna WANETs. Furthermore, the structure of the derived routing scheme is based on the evaluation of a routing metric for each candidate relay node. This structure enables easy integration with other routing schemes that take  additional constraints into account (e.g., traffic loads,
delays, user requirements, etc.').

\par
To demonstrate the superiority of these routing schemes, we compare their performance to the performance of commonly-used routing schemes proposed in previous works (e.g. \cite{haenggi2005routing,nardelli2012efficiency}). 
\par
{{ 
 This paper takes a significant step towards optimal cross layer routing by considering the  joint optimization of the routing decisions together with the physical layer. Using this approach, the presented results allow much better understanding on the optimal relation between physical layer parameters and networking parameters. }}
\par
{{ 
The main contributions of this  paper is the presentation, for the first time, of an optimal cross layer routing in  WANETs. The optimal routing uses all the available knowledge at each node.
The paper also  presents a suboptimal scheme that requires only a low computational complexity and uses only part of the available CSI. The performance gap between the optimal scheme and this low complexity scheme is shown to be negligible. 
}}
\par
The rest of this paper is organized as follows. Section \ref{Section: System Model}
describes the structure of the WANET. Section 
\ref{Section: Routing SISO WANET} 
 presents the novel routing schemes. Section 
\ref{Section: Numerical Results} analyzes the numerical results, and Section \ref{Section: Conclusions} presents the conclusions.

\textit{Notations}: 
Throughout this paper, matrices and vectors are denoted
by boldface symbols. The conjugate of a complex number
is marked by $(\cdot)^*$, and $(\cdot)^T,(\cdot)^H$ denotes the transpose and
the conjugate transpose of a matrix, respectively. The expectation
and probability of a random variable (r.v.) are denoted by $\mathbb E(\cdot)$ and $\mathbb P(\cdot)$, respectively. $\mathbf I_N$ is the $N \times N$  identity matrix, and $\|\mathbf x\|$ is the
Frobenius norm of the vector $\mathbf x$.

\section{System Model\label{Section: System Model}}
We assume a decentralized WANET over an infinite area, where each node is equipped with a single   antenna.
The locations of the nodes are modeled by a homogeneous
Poisson point process (PPP),  
$\Phi$, with density $\lambda$ (i.e., the
number of nodes in any area of size $A$ has a Poisson
distribution with a mean of $\lambda A$).



\subsection{Medium-Access-Control (MAC)}
We use the common slotted ALOHA medium-access
(MAC) model (e.g., \cite{arnbak1987capacity,baccelli2006aloha,haenggi2009stochastic,baccelli2010stochastic}) 
where each node chooses independently to  transmit with probability $p_{\mathrm{tx}}$ or a  listening receiving node with probability $(1-p_{\mathrm{tx}})$. 
These independent transmission decisions give a simple  yet robust protocol that does not require coordination between the nodes.
Thus, the locations of the transmitting nodes can be represented by the PPP $\Phi_T$ with a
density of $\lambda p_{\mathrm{tx}}$.
 The MAC decisions are taken locally and independently. Thus, each node does not know which of the  other nodes  are scheduled to transmit. 

\subsection{Physical Layer (PHY)}
 The received signal in the $i$-th receiving node is given by
\begin{IEEEeqnarray}{rCl}\label{e: received signal main}
y_i=
\sum \limits_{\substack{j\in \mathbb N}}
\sqrt{\rho}\cdot
r_{i,j}^{-\frac{\alpha}{2}}h_{i,j}
z_{j}
+  v_i
\end{IEEEeqnarray}
where 
 $r_{i,j}$ and $h_{i,j}$ are the distance and the channel gain between the $j$-th transmitting node and the
$i$-th receiving node, respectively. We assume throughout that all channel gains,  
$h_{i,j},$ are composed of statistically independent and identically
distributed (i.i.d.) complex Gaussian random variables with
zero mean and unit variance.  The data symbol of the $j$-th transmitting node is represented by the scalar 
$z_j$, where all data symbols are i.i.d. standard complex Gaussian random variables, $z_j\sim\mathcal C \mathcal N(0,1)$. The thermal noise at the $i$-th receiving node,  $v_i$,   is assumed to have a complex Gaussian distribution
with zero mean and $\mathbb E\{v_i v_i^H\}=\sigma_v^2$.   The path loss exponent   is denoted by $\alpha$ and satisfies $\alpha>2$.
  We consider nodes with an identical transmission power, $\rho$.

We define  
 \begin{IEEEeqnarray}{rCl}\label{e: W_i_j}
W_{i,j}
\triangleq
|h_{i,j}|^2
\end{IEEEeqnarray}
to denote the fading variable between  the $j$-th transmitting node and  the $i$-th receiving node. 
It may be noted that $W_{i,j}\sim\text{Exp}(1)$, $\forall i,j$. Considering the decoding of the data-symbol  from transmitting node $j$ at receiving node $i$, the desired signal power is
\begin{IEEEeqnarray}{rCl}\label{e: power of the k-th data stream}
S_{i,j}
=
\rho
r_{i,j}^{-\alpha}
W_{i,j}
\end{IEEEeqnarray}
and the power of the interference  is 
\begin{IEEEeqnarray}{rCl}\label{e: interference of the k-th data stream}
J_{i,j}
&=&
\rho\sum \limits_{\substack{\ell\ne j}}
r_{i,\ell}^{-\alpha}
W_{i,\ell}.
\end{IEEEeqnarray}
It may be noted as well that  all fading variables (i.e., 
$W_{i,j}$) are i.i.d.

The signal-to-interference-plus-noise-ratio (SINR) of the $j$-th transmitting node that is detected at receiving node $i$ is given by 
\begin{IEEEeqnarray}{rCl}\label{e: SINR}
\text{SINR}_{i,j}=
\frac{S_{i,j}}{J_{i,j}+\sigma_v^2}.
\end{IEEEeqnarray}
Assuming a near optimal coding scheme and a  long enough code word, the rate contribution per slot from transmitting node $j$ to receiving node $i$  is given by
\begin{IEEEeqnarray}{rCl}\label{e: achievable rate over k-th data stream}
R_{i,j}=
B\cdot\log_2\left(1+
\text{SINR}_{i,j}
\right)
\end{IEEEeqnarray}
where $B$ is the channel bandwidth (in Hertz). 
\par
Thus if two transmitters decide to transmit to the same receiver node, the two packets will be received in an interference-limited manner; i.e., the receiver will decode each message, while considering the other message as noise.  Hence, we can consider each message separately, and the interference term for each message includes all other transmitted messages, regardless of their destinations.
\subsection{Routing Mechanism \label{subsection: routing mechanism}} 
Each message has an origin (source) node and a destination node. The routing algorithm needs to forward messages from their sources to their destinations  through nodes which serve as relays. 
As stated above, we focus on geographical routing together with opportunistic relaying \cite{weber2008longest}. Specifically, the decisions on the next-hop are based on the locations of the nodes, and the selections are independent among nodes. In opportunistic relaying, a transmitter first selects the next relay  based on channel states and relay locations. Afterward, the transmitter searches in its buffer
for the message that gains most from the use of this relay. Note that if the message buffer is long enough, the destination of the selected message will typically be very close to the line that extends from the source to the selected relay. Thus, in opportunistic relaying, all routes will be almost  straight lines, and all transmissions will be at good channel conditions.
\par
We focus on the case where each transmitting node only has  knowledge of the  locations  of its  neighbor nodes, and the channel gain to each neighbor.  We define  two nodes to be neighbors if their distance is at most $r_{\rm{A}}$. Thus, the neighborhood of node $j$ is the set of indices:
\begin{IEEEeqnarray}{rCl}\label{e: N, set of nodes within routing zone}
\mathcal N_j\triangleq\left\{
i: \| \mathbf r_{i}-\mathbf r_{j} \| \le r_{\rm{A}}
\right\}
\end{IEEEeqnarray}
where the vector $\mathbf r_i$ contains the coordinates of the $i$-th node.
We also use the term \textit{routing zone} to describe the area of all potential neighbors; i.e., the
circular area of radius  $r_{\rm{A}}$ that is centered at each node.
 
 The available knowledge of the $j$-th transmitting node is denoted as $\mathcal M_j$. It contains  the channel gains and the locations of the nodes within its routing zone. Using the notations given above, this knowledge can be written mathematically as the set:
\begin{IEEEeqnarray}{rCl}\label{e: set M, local knowledge}
\mathcal M_j \triangleq \Big\{
(\mathbf r_i,h_{i,j}):\forall i \in \mathcal N_j
\Big\}.
\end{IEEEeqnarray}

This local knowledge can be achieved in various ways. For instance, one can assume that each node can acquire its location using a GPS receiver. Each node also shares its location with its neighbors jointly with the transmission of data  messages. After several slots, each node can know the  locations of all nodes in its neighborhood.
The CSI to each of the neighbors  is usually obtained by pilot-based channel estimation at the receivers (e.g., \cite{yoo2006capacity, baltersee2001information}). WANETs typically use time division duplex (TDD); i.e., they transmit and receive on the same frequency. Thus, the CSI for the transmitters is typically obtained by using the channel reciprocity (e.g., 
\cite{guillaud2005practical,liang2016fdd}).
\par
We denote the routing selection   of transmitting node $j$ by the function $f(\mathcal M_j)$, which here will be termed  the  \textit{routing function}. The  routing function receives  the available knowledge, $\mathcal{M}_j$, as input; i.e., the locations of all nodes in the routing zone and the channel gain from each   of these neighbors. 
The function output is the routing selection, i.e., $f(\mathcal M_j)\in\mathcal N_j$ is the index of the selected relay for the next hop. 
\subsection{Routing Performance} 
While the routing mechanism aims to deliver messages from their sources to their destinations (through several hops), our goal here is to measure  performance through the analysis of a single time slot.

Since we want to analyze the maximal network performance, we assume that the messages are generated in a homogenous manner in all nodes of the network. We also assume that the message generation rate is high enough  that the message buffers of all the nodes are rarely empty. We also assume that none of the nodes becomes a network bottleneck, so that the data flow in the network is homogenous. For this assumption to hold, we  must also detail our assumptions on network mobility.

\par
We first need to distinguish between three popular mobility models. Each model differs in terms of the relationship between the node mobility and the message transfer time 
(i.e., the time that it takes for a message to get from its source to its destination). In the extreme \textit{very-fast-mobility} model, the node mobility is considered to be much higher than the message transfer time. Furthermore, a node may keep the message in its buffer and transmit it only when the node is close to the destination \cite{grossglauser2001mobility}. Clearly, the very-fast-mobility model is analogous to a very loose delay constraint.

In the second model, commonly termed the \textit{static} model (e.g. \cite{gong2013local,blaszczyszyn2015random}), the node mobility is very low and the topology of the network changes very slowly compared to the message transfer time (i.e., the delay constraint is quite short). In this static model, the specific network topology becomes crucial, because some nodes will need to relay more messages than others. 
These nodes may become network bottlenecks as a result of high message traffic. (For example, consider the case where a node is located between two groups of
nodes, so that all message traffic between these groups must pass through this node. This node is a
network bottleneck, and will be required to handle a larger communication load than its neighbors). In this (quite practical) model, the total network throughput depends critically on the performance of the bottlenecks.

In the third model, the popular \textit{fast-mobility} model (e.g. \cite{gong2013local}), the node mobility is considered to be smaller than the message transfer time. However, the node mobility is sufficient to change the network structure just enough that none of the nodes will be a network bottleneck for a long time. Thus, over a long enough period of time, all the nodes will experience all possible network neighborhoods, and the network can be considered  homogenous. In this work we focus on this fast mobility model, as it relatively easy to analyze, and can give at least an upper bound on the achievable performance in real life networks.

We also assume that each node has a very long buffer.
Under these assumptions, it is reasonable to assume that  small movements in the network will cause enough changes to the network topology so that over a long enough observation time,  the network can be considered  homogenous. Thus,  the routing performance can be characterized by the analysis of a typical transmitting node located at the origin and transmitting at a typical time slot.

 We refer to the probe transmitting node as transmitting node $0$ (which can either transmit a new message, or relay
a message received from another user). The available knowledge of the probe transmitting node is given by $\mathcal M_0$, and the next hop selection is given by  $f(\mathcal M_0)\in\mathcal N_0$. 
The Asymptotic-Density-of-Rate-Progress (ADORP) performance metric is given by  \cite{Richter2014_IEEEI_SO}: 
\begin{IEEEeqnarray}{rCl}\label{e: ADORP SISO definition}
\bar D\left(f(\cdot)\right)
 \triangleq
\lambda  p_{\mathrm{tx}}(1-p_{\mathrm{tx}}) 
\mathbb E\left\{r_{f(\mathcal M_0),0} R_{f(\mathcal M_0),0}\right\}
\end{IEEEeqnarray}
where $\lambda p_{\mathrm{tx}}$ is the density of the  active transmitting nodes and $(1-p_{\mathrm{tx}})$ is the probability that the selected relay is indeed listening (recall that that each transmitting node does not know which other nodes are scheduled to transmit in the current time slot). In words, the ADORP is the density of good transmissions multiplied by distance-rate product; i.e., the  distance between the typical transmitting node-relay pair multiplied by its achievable rate. 

The ADORP provides a convenient way to evaluate the contribution of a single transmission to the total network throughput. Both high data rate and long transmission distance have the same effect on faster transfer of messages in multi-hop routing.  

It may be noted that unlike certain other works (e.g. \cite{ravindran2010optimized,li2010selection}) we simply consider the hop-length and not the progress toward a specific destination because of our long delay assumption. Using the long buffer at each node, we can employ an opportunistic relaying scheme \cite{weber2008longest} where a transmitting node first selects a preferred relay and afterwards finds a message in the buffer that should go in the chosen direction.  Obviously, if the
buffer size is very large, each relay is located on the line to
some desired destination. Therefore, we  assume that in this setup, the routes will be very close to the straight lines between sources and destinations. 


The ADORP, (\ref{e: ADORP SISO definition}),  can be explicitly written as  
\begin{IEEEeqnarray}{rCl}\label{e: ADORP SISO redefinition}
\bar D\left(f(\cdot)\right)
& \triangleq&
\lambda B p_{\mathrm{tx}}(1-p_{\mathrm{tx}})
\TwoOneColumnAlternate{\\ \notag &&}{}   
\cdot  \mathbb E\left\{r_{f(\mathcal M_0),0}
\log_2\left(1+\frac{S_{f(\mathcal M_0),0}}{J_{f(\mathcal M_0),0}+\sigma_v^2}\right)\right\}.
\end{IEEEeqnarray}
The expectation in \eqref{e: ADORP SISO redefinition} considers the routing function, the available knowledge, the desired signal and the aggregated interference.

\section{Routing Schemes  Based on Local Knowledge
\label{Section: Routing SISO WANET}} 
In this section we derive the optimal routing function, and  present three suboptimal routing functions that require much lower computational complexity.

\subsection{Statistically-Optimal (SO) Routing}

Using the law of total expectation and conditioning on the known local information,  the ADORP performance metric,  
 \eqref{e: ADORP SISO redefinition}, can be written as
\begin{IEEEeqnarray}{rCl}\label{e: ADORP law total expectation}
\bar D(f(\cdot))=\lambda B p_{\mathrm{tx}}(1-p_{\mathrm{tx}}) \mathbb E_{\mathcal M_0} \Big\{ r_{i,0}G\left(f(\mathcal M_0),\mathcal M_0\right) \Big\}
\IEEEeqnarraynumspace
\end{IEEEeqnarray}
where 
\begin{IEEEeqnarray}{rCl}\label{e: G(i,k)}
G(i,\mathcal M_0)\triangleq\mathbb E_{J|\mathcal M_0}\left\{\log_2\left(1+\frac{S_{i,0}}{J_{i,0}+\sigma_v^2}\right)\Big|\mathcal M_0\right\}.
\IEEEeqnarraynumspace
\end{IEEEeqnarray}

It may be noted that  $S_{i,0}\in \mathcal M_0$, and hence $S_{i,0}$ is known when $\mathcal M_0$ is known. Thus, the expectation in \eqref{e: G(i,k)} is taken only with respect to $J_{f(\mathcal M_0),0}$. 

Crucially, the
routing decisions have no effect on the interference. This lack of effect  occurs because  the MAC
decisions (i.e., when to transmit) are independent of the routing decisions.
Hence, the optimization of the routing function for the probe receiving node is independent of the routing functions of all other transmitting nodes and can be solved directly from \eqref{e: G(i,k)}.

The optimal routing function can be easily derived
by maximizing the internal expectation of \eqref{e: ADORP law total expectation}.  Specifically, the routing function that  optimizes \eqref{e: ADORP law total expectation}  is 
\begin{IEEEeqnarray}{rCl}\label{e: opt metric SISO first}
f_{\rm{SO}}(\mathcal M_0)
&=&
\argmax_{\substack{
i\in \mathcal N_0 }}
r_{i,0} \cdot G(i,\mathcal M_0)
\\
&=&
\argmax_{\substack{
i\in \mathcal N_0 }}
m_{_{\rm{SO}}}(i,\mathcal M_0).
\notag
\end{IEEEeqnarray}
In other words, the routing function can be written as the solution to an optimization problem where the solution is the index of the node that maximizes a routing metric. For the optimal solution to 
\eqref{e: ADORP law total expectation}, the routing metric is the expectation over the throughput times the distance to the candidate relay:
\begin{IEEEeqnarray}{rCl}\label{e: SO routing metric}
m_{_{\rm{SO}}}(i,\mathcal M_j)\triangleq r_{i,j} \cdot G(i,\mathcal M_j).
\end{IEEEeqnarray}
To reiterate, the  routing metric $m_{_{\rm{SO}}}(i,\mathcal{M}_j)\in \mathbb{R}^+$ is the score for each candidate node, and the routing function, $f(\mathcal M_j)\in\mathcal N_j$ is the index of the selected node (with the highest score).
\par
The optimization part of this problem is quite simple since we only need to evaluate the metric for each node in the routing zone, and choose the best node (typically, the number of nodes in the routing zone will be quite small). On the other hand, the evaluation of the routing metric can be very demanding, which may make this optimal scheme unpractical. This evaluation depends on the conditional distribution of 
$J_{i,0}|\mathcal M_0$. In the numerical results section below, we evaluate this expectation by implementing Monte Carlo simulations for the given local knowledge, $\mathcal M_0$.

The SO routing is too
complicated to be used in practical networks. However, this does not
reduce the importance of the SO routing. Because routing is a complicated task, it is rare to be able to characterize its optimal performance. In the scenario  below, this is the first result that presents an optimal routing scheme and allows the
evaluation of the optimal performance. This optimal performance is an important benchmark for any other scheme, and it serves to quantify the `distance' of each scheme from optimality. In the following, we use the SO performance as a reference when we derive sub-optimal routing schemes with reduced complexity.
\par
Apart for local knowledge, the evaluation of the expectation requires knowledge of the network parameters (e.g., the node density, the transmission probability and the path-loss exponent). These parameters can be hardcoded  during network deployment or estimated at each node (see for example 
\cite{srinivasa2009path}).
\par
One approach (which is also used in the numerical results section below) is to evaluate
this expectation through Monte Carlo (MC) simulations. In this approach, at each MC step (a realization of a random network around the transmitter), we use a different distribution to  model the nodes inside and outside  the routing zones. For the nodes that are located \textit{within} the routing zone, given the local knowledge, the transmitter knows the node locations but not their activity (recall that the transmitter does not know which of those nodes  is scheduled to transmit). Thus, in each MC step, each of these nodes has a probability of $p_{\tx}$ to transmit, independent of its neighbors. The nodes that are located \textit{outside} the routing zone are different because their locations are not known to the transmitter. Thus, the MC uses the PPP modeling, taking into account the node density and the transmission probability.
\par 
In the next subsections, we present three alternative routing functions that decrease the complexity. Each of the schemes is based on  
 a simplified optimization. Using the performance of the SO scheme as reference, we  show that these suboptimal schemes achieve performances that are very close to optimal.

\subsection{Bound-Optimal (BO) Routing}
\begin{figure}[t]
    \center
    \includegraphics[scale=0.5,trim={5cm 4.3cm 5cm 4.1cm},clip=true]{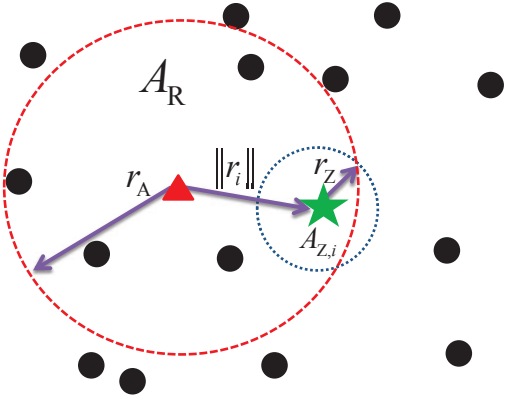}
    \caption{Local neighborhood of the probe transmitting node: each dot represents a node in the network. The triangle is the probe transmitting node and the star is the tested relay (node $i$ in this example). The dashed circle represents the routing zone, $A_{\mathrm{R}}$, with radius $r_{\mathrm{A}}$ centered at the probe transmitting node. The distance between
the probe transmitting node and the tested relay is $\|\mathbf r_i\|$. The threshold zone of node $i$, $A_{\mathrm{Z},i}$, is marked by the small dotted circle with radius $r_{\mathrm{Z}}$ around the tested relay. These circles are used in the proof of Lemma \ref{Theorem: lemma G(i,M) lower bound} as the maximal distance, which requires special attention.
{{The routing chooses the next-hop that has the maximal routing metric, based on the  routing scheme. The difference between schemes is the complexity of the evaluation of the  routing metric and its quality (see also Table \ref{Table: Routing Characteristic}).
 }}
}
    \label{fig: Routing Zone General}
\end{figure}
To derive this low complexity routing scheme, we replace the optimization of 
\eqref{e: opt metric SISO first} by an optimization of the following lower bound on \eqref{e: opt metric SISO first}. This lower bound  is a slightly modified version of the lower bound proposed by  George et al. \cite{ERD_LB_2013_Yaniv,ERD_UB_2014_Yaniv}.

\begin{lemma}
\label{Theorem: lemma G(i,M) lower bound}
For each node within the routing zone ($i\in\mathcal N_0$), denote the disk centered around the node with a radius of $r_{\rm{Z}}= \left(\tfrac{\alpha-2}{\alpha\pi \lambda p_{\mathrm{tx}}}\right)^{1/2}$ as its threshold zone,  $A_{\rm{Z},i}$ (see Fig. \ref{fig: Routing Zone General}), and denote by $N_{{\rm{Z}},i}$ the number of nodes in the intersection between the  threshold zone,  $A_{\rm{Z},i}$, and the  routing zone (with radius $r_{\rm{A}})$. The function $G(i,M)$ in \eqref{e: G(i,k)} can be  lower bounded  using only the known local knowledge, $\mathcal M_0$,  and the system parameters by:
\begin{IEEEeqnarray}{rCl}\label{e: lemma G(f,K) LB}
G(i,\mathcal M_0) \ge
p_{\rm{Z}}(i,\mathcal M_0) 
 \cdot r_{i,0} \log_2
\left(1+\frac{ S_{i,0}}{\bar J_1^i +\bar J_2^i +\sigma_v^2}
\right)
\IEEEeqnarraynumspace
\end{IEEEeqnarray}
where 
\begin{IEEEeqnarray}{rCl}\label{e: p_Z}
p_{\rm{Z}}(i,\mathcal M_0) &&
\TwoOneColumnAlternate{\\\quad \notag &&}{} 
=
\begin{cases}(1-p_{\mathrm{tx}})^{N_{{\rm{Z}},i}}, & {\rm{if  }}\,\  \|\mathbf r_{i}\|+r_{\rm{Z}}<r_{\rm{A}} \\
e^{-\lambda p_{\mathrm{tx}} B_{{\rm{T}},i}}(1-p_{\mathrm{tx}})^{N_{{\rm{Z}},i}}
, & \rm{otherwise} \\
\end{cases}
\end{IEEEeqnarray}
 is the probability that there is no interfering transmitting node within the threshold zone, 
\begin{IEEEeqnarray}{rCl}\label{e: B1 by integral}
B_{{\rm{T}},i}
&=&
r_{\rm{Z}}^2 \tan^{-1}\Big(\tfrac{x_0-\|\mathbf  r_{i}\|}{\sqrt{r_{\rm{Z}}^2-(\|\mathbf  r_{i}\|-x_0)^2}}\Big)
\TwoOneColumnAlternate{\notag\\&&}{}
+(x_0-\|\mathbf  r_{i}\|)
\sqrt{-\|\mathbf  r_{i}\|^2 +2\|\mathbf  r_{i}\|x_0-x_0^2+r_{\rm{Z}}^2}
\notag \\&& - 
r_{\rm{A}}^2 \tan^{-1}\Big(\tfrac{x_0}{\sqrt{r_{\rm{A}}^2-x_0^2}}\Big)
-x \sqrt{r_{\rm{A}}^2-x_0^2}
\IEEEeqnarraynumspace
\end{IEEEeqnarray}
and $x_0=\frac{r_{\rm{A}}^2+\|\mathbf  r_{i}\|^2-r_{\rm{Z}}^2}{2\|\mathbf  r_{i}\|}$. 
The terms in the denominator are given by
\begin{IEEEeqnarray}{rCl}\label{e: I1}
\bar J_1^i
&=&
\sum\limits_{ \substack{\ell\in \mathcal N_0 \\ 
\ell:\| \mathbf r_{i}-\mathbf r_{\ell} \| > 
\sqrt{\tfrac{\alpha-2}{\alpha\pi \lambda p_{\tx}}}}}
p_{\mathrm{tx}}\rho\cdot \| \mathbf r_{i}-\mathbf r_{\ell} \|^{-\alpha},
\vspace{0.5cm}
\\ \label{e: J_2^i general}
\bar J_2^i
&=&
\frac{2(\pi-\theta_s)
\rho\lambda p_{\mathrm{tx}} r_{\rm{Z}}^{2-\alpha}}{\alpha-2}
\TwoOneColumnAlternate{\\ \notag &&}{} 
+\frac{
\rho\lambda p_{\mathrm{tx}} }{\alpha-2}\Big[
\int_{\theta_s}^{2\pi-\theta_s}
\Big(-\|\mathbf  r_{i}\|\cos(\theta)
\TwoOneColumnAlternate{\\ \notag &&}{} 
+\sqrt{r_{\rm{A}}^2 - \|\mathbf  r_{i}\|^2\sin^2(\theta)}
\Big)^{2-\alpha}
d\theta\Big]
,
\IEEEeqnarraynumspace
\end{IEEEeqnarray}
where \begin{IEEEeqnarray}{rCl}\label{e: theta_s lemma}
\theta_s
&=&
\begin{cases}
0, & r_{\rm{Z}}+\|\mathbf r_{i}\|\le r_{\rm{A}} \\
\cos^{-1}\big(\frac{r_{\rm{A}}^2-r_{\rm{Z}}^2-\|\mathbf  r_{i}\|^2}{2r_{\rm{Z}}\|\mathbf  r_{i}\|}\big), & r_{\rm{Z}}+\|\mathbf r_{i}\|>r_{\rm{A}} \\
\end{cases}
.
\IEEEeqnarraynumspace
\end{IEEEeqnarray}
 It may be noted that \eqref{e: J_2^i general} has closed form expressions for any integer values of $\alpha$ larger than $2$ (detailed expressions  for $\alpha=3$ and $\alpha=4$ are given in 
\eqref{e: a=3}
and
\eqref{e: a=4}
 in   Appendix \ref{Appendix: calculation of aggregate intrf}).
\end{lemma}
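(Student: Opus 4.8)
The plan is to lower bound the conditional expectation in \eqref{e: G(i,k)} by isolating the interference contributed from within the threshold zone $A_{\rm{Z},i}$ and then replacing the remaining (far) interference by its mean. I would write the aggregate interference as $J_{i,0}=J_{\rm{near}}+J_{\rm{far}}$, where $J_{\rm{near}}$ collects the contributions of active nodes lying in $A_{\rm{Z},i}$ and $J_{\rm{far}}$ those from all nodes outside it. Let $E$ be the event that no active (transmitting) node falls in $A_{\rm{Z},i}$, i.e.\ $J_{\rm{near}}=0$. Since the integrand $\log_2(1+S_{i,0}/(J_{i,0}+\sigma_v^2))$ is non-negative (as $S_{i,0}\ge0$), the law of total expectation, after discarding the non-negative contribution of $E^{c}$, gives
\[
G(i,\mathcal M_0) \ge \mathbb P(E\mid\mathcal M_0)\,\mathbb E\!\left\{\log_2\!\left(1+\frac{S_{i,0}}{J_{\rm{far}}+\sigma_v^2}\right)\,\Big|\,E,\mathcal M_0\right\}.
\]

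First I would argue that $J_{\rm{far}}$ is independent of $E$: the ALOHA activity decisions are independent across nodes, and the PPP is independent on disjoint regions, so the conditioning on $E$ may be dropped inside the expectation. Next, note that $\phi(J)\triangleq\log_2(1+S_{i,0}/(J+\sigma_v^2))$ is convex in $J$ (its second derivative is positive, since augmenting the denominator by $S_{i,0}>0$ strictly decreases the reciprocal square). Jensen's inequality then yields $\mathbb E\{\phi(J_{\rm{far}})\}\ge\phi(\mathbb E\{J_{\rm{far}}\})$, which lets me substitute the random far interference by its mean and reduces the problem to evaluating the two scalars $\mathbb P(E\mid\mathcal M_0)=p_{\rm{Z}}(i,\mathcal M_0)$ and $\mathbb E\{J_{\rm{far}}\}=\bar J_1^i+\bar J_2^i$.

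For the probability $p_{\rm{Z}}$, the $N_{{\rm{Z}},i}$ \emph{known} neighbors lying in $A_{\rm{Z},i}\cap\{\text{routing zone}\}$ must each be silent, contributing the factor $(1-p_{\mathrm{tx}})^{N_{{\rm{Z}},i}}$; when the threshold zone protrudes outside the routing zone ($\|\mathbf r_{i}\|+r_{\rm{Z}}>r_{\rm{A}}$) the \emph{unknown} PPP must additionally be empty over that excess region, contributing the void probability $e^{-\lambda p_{\mathrm{tx}}B_{{\rm{T}},i}}$, where $B_{{\rm{T}},i}$ is the area of $A_{\rm{Z},i}$ lying outside the routing zone; this recovers \eqref{e: p_Z}. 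For the mean interference, $\bar J_1^i$ in \eqref{e: I1} sums the mean contribution $p_{\mathrm{tx}}\rho\|\mathbf r_{i}-\mathbf r_{\ell}\|^{-\alpha}$ of each known neighbor outside $A_{\rm{Z},i}$ (using $\mathbb E\{W_{i,\ell}\}=1$), while $\bar J_2^i$ in \eqref{e: J_2^i general} is the mean PPP interference from the region lying outside both the routing and the threshold zones, obtained from Campbell's theorem as $\rho\lambda p_{\mathrm{tx}}\int\|\mathbf x-\mathbf r_{i}\|^{-\alpha}\,d\mathbf x$ over that region.

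The probabilistic skeleton above is short; the main obstacle is purely geometric. Both $B_{{\rm{T}},i}$ and the integral defining $\bar J_2^i$ require parametrizing the region exterior to the routing-zone disk relative to the threshold-zone disk centered at $\mathbf r_{i}$, i.e.\ the lens formed by two overlapping circles. Passing to polar coordinates centered at $\mathbf r_{i}$ and integrating $\int_{s_{\min}(\theta)}^{\infty}s^{1-\alpha}\,ds=s_{\min}(\theta)^{2-\alpha}/(\alpha-2)$, where $s_{\min}(\theta)$ equals $r_{\rm{Z}}$ when the threshold zone is the binding constraint and the distance to the routing-zone boundary $-\|\mathbf r_{i}\|\cos\theta+\sqrt{r_{\rm{A}}^2-\|\mathbf r_{i}\|^2\sin^2\theta}$ otherwise, produces the half-angle $\theta_s$ of \eqref{e: theta_s lemma} separating the two regimes, the crossing abscissa $x_0$, and the $\tan^{-1}$ terms of \eqref{e: B1 by integral}. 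Carrying out the remaining radial and angular integrations then yields the closed forms of $\bar J_2^i$ for integer $\alpha>2$.
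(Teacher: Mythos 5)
Your proposal is correct and follows essentially the same route as the paper's proof: conditioning on the event that the threshold zone $A_{\rm{Z},i}$ contains no active transmitter, discarding the non-negative complementary term, applying Jensen's inequality to the convex map $J\mapsto\log_2\bigl(1+S_{i,0}/(J+\sigma_v^2)\bigr)$, and then computing $p_{\rm{Z}}$ and the mean residual interference $\bar J_1^i+\bar J_2^i$ via the same split between known in-zone nodes and the exterior PPP with the same two-circle geometry. The only cosmetic difference is that you make explicit the independence argument (ALOHA decisions and PPP restrictions to disjoint regions) that lets the conditioning on the empty-zone event be dropped from the far-interference expectation, a step the paper uses implicitly when it equates $\mathbb E\{J_{i,0}\mid d_i>r_{\rm{Z}},\mathcal M_0\}$ with the unconditioned means over $\overline{A_{\rm{Z},i}}\cap A_{\mathrm{R}}$ and $\overline{A_{\rm{Z},i}}\cap\overline{A_{\mathrm{R}}}$.
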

\begin{proof}
See Appendix \ref{Appendix: proof of lemma BO}.
\end{proof}
The routing function that  optimizes  \eqref{e: lemma G(f,K) LB}  is termed  Bound Optimal routing (BO) here. This function is given by    
\begin{IEEEeqnarray}{rCl}\label{e: BO metric}
f_{\rm{BO}}(\mathcal M_0&&)
=
\argmax_{\substack{
i\in \mathcal N_0 }}
m_{_{\rm{BO}}}(i,\mathcal M_0)
\end{IEEEeqnarray}
where
the routing metric of the BO  is\begin{IEEEeqnarray}{rCl}\label{e: BO routing metric}
m_{_{\rm{BO}}}(i,\mathcal M_0)
\triangleq
p_{\rm{Z}}(i,\mathcal M_0) 
r_{i,0} \log_2
\left(1+\frac{ S_{i,0}}{\bar J_1^i+\bar J_2^i+\sigma_v^2}
\right).
\IEEEeqnarraynumspace
\end{IEEEeqnarray}
This metric  can be written in a closed form for  $\alpha =
3, 4, 5, ...$ (when \eqref{e: BO routing metric} has a closed form expression). 
However, this can hardly be called a `low-complexity' algorithm (recall  that \eqref{e: BO routing metric} requires the substitution of  
\eqref{e: p_Z}-\eqref{e: J_2^i general}).
   
\subsection{Narrow Knowledge Statistically-Optimal (NSO) Routing}
In the following, we derive a sub-optimal method which is much simpler to evaluate. This method produces nearly the same performance as the SO method, but can be evaluated using a simple, $1$-dimensional lookup table.
\par
As each node has only partial knowledge of the network state, it needs to perform the best routing decision given that knowledge. For this purpose, we use the statistical model of the unknown nodes. The statistical modeling cannot compensate for the unknown data. Nevertheless, it can give us the optimal balancing of the known data.
\par  
  To reduce the computational complexity, in the following we suggest using only  part of the available knowledge (which will be termed   \textit{narrow knowledge} here) when evaluating the routing metric for a specific node. The narrow knowledge of node $j$ on node $i$ is   $\mathcal M_j^i=\{r_{i,j},h_{i,j}\}$; i.e., taking only into account the distance to node $i$ and its channel gain (and ignoring the known data on all other neighbors). As we will see, the loss of performance due to the use of narrow knowledge is negligible, whereas the complexity is reduced significantly. 

This proposed low-complexity scheme will be denoted as Narrow knowledge Statistically Optimal (NSO). The NSO routing function  evaluates $G(i,\mathcal M_0)$ by solely using the available knowledge of the tested relay node, $i$. In mathematical terms, we  substitute $G(i,\mathcal M_0)$ by: :
\begin{IEEEeqnarray}{rCl}\label{e: G(f,K) LC}
G(i,\mathcal M_0) &\approx& G_\mathrm{N}(i,\mathcal M_0) \TwoOneColumnAlternate{\notag \\&\triangleq&}{\triangleq}  \mathbb E_{J|\mathcal M_0^i}\left\{r_{i,0} \log_2\left(1+\frac{  S_{i,0}}{J_{i,0}+\sigma_v^2}\right)\Big|
\mathcal M_0^i\right\}.
\IEEEeqnarraynumspace
\end{IEEEeqnarray}
As $r_{i,0}$ is known given $\mathcal{M}_0^i$, the optimization of this approximation results in the  NSO routing function: 
\begin{IEEEeqnarray}{rCl}\label{e: SO metric narrow knowledge SISO}
&&f_{\rm{NSO}}(\mathcal M_0)
=
\argmax_{\substack{
i\in \mathcal N_0 }}
m_{_{\rm{NSO}}}(i,\mathcal M_0^i)
\IEEEeqnarraynumspace
\end{IEEEeqnarray}
where
the routing metric of the NSO  is
\begin{IEEEeqnarray}{rCl}
\label{e: NBO routing metric ex}
m_{_{\rm{NSO}}}(i,\mathcal M_0^i)
\triangleq
r_{i,0} \cdot
\mathbb E\left\{\log_2\left(1+\frac{S_{i,0}}{J_{i,0}+\sigma_v^2} \right)
\Big|\mathcal M_0^i
\right\}.
\IEEEeqnarraynumspace
\end{IEEEeqnarray}
\par
Thus, the evaluation of 
(\ref{e: NBO routing metric ex}) is much simpler than the evaluation of 
\eqref{e: SO routing metric}, due to the difference in the conditional distribution of the interference, $J_{i,0}$. In the expectation in 
\eqref{e: NBO routing metric ex}, $J_{i,0}$ is independent of  narrow knowledge, and hence, its distribution is identical for all nodes.  
 The complexity of the optimization problem is not large, since the number of points for which the routing needs to calculate their metrics  is typically not large. To further simplify the evaluation of the metric in \eqref{e: NBO routing metric ex}, we suggest using the function $q(x_i)\triangleq\mathbb E\Big\{\log_2(1+\frac{x_i}{J_{i,0}+\sigma_v^2})\Big|\mathcal M^i_0\Big\}$.
Thus, this expectation is only a function of $S_{i,0}$ which can be written as 
\begin{IEEEeqnarray}{rCl}\label{e: NSO routing metric}
m_{_{\rm{NSO}}}(i,\mathcal M_0^i)
=
r_{i,0} \cdot
q(S_{i,0})
\end{IEEEeqnarray}
and the function $q(\cdot)$ can be evaluated as follows. 
\begin{figure}[t]
       \includegraphics[width=\FigureWidth]{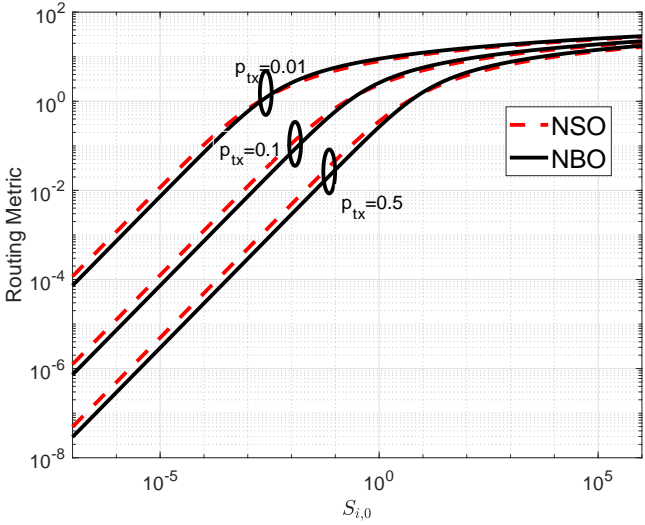}
    \caption{ NBO routing metric and NSO routing metric   vs.  $S_{i,0}$, for various $p_{\mathrm{tx}}\in\{0.01,0.1,0.5\}$ where $r_{i,0}=1$ and $\alpha=4$.}
    \label{fig: SISO NBO NSO}
\end{figure}
\par
For simple evaluation, the function $q(\cdot)$, which is the only complicated part in the evaluation of the routing metric, can be evaluated once and stored in
a lookup table. Thus, the complicated expectation can be evaluated only once, during the network design stage, and the routing metric can be  calculated easily in real time using the lookup table. Another alternative would be to evaluate the lookup table on the fly using the interference measurements at the transmitting node (again using the narrow knowledge assumption such that the interference statistics is identical for all nodes). 
\par
In the next subsection we present
an even simpler routing function for which we can give a simple, closed-form expression of the routing function that does not even require a lookup table.
\subsection{Narrow Knowledge Bound-Optimal (NBO) Routing}

  The equivalent of Lemma \ref{Theorem: lemma G(i,M) lower bound}, in \eqref{e: G(f,K) LC},  becomes:
\begin{IEEEeqnarray}{rCl}\label{e: G(i,M) LB NBO}
G_\mathrm{N}(i,\mathcal M_0) \ge
p_{\rm{Z}}(i,\mathcal M_0^i)r_{i,0}
\log_2\left(1+\frac{S_{i,0}}{\gamma+\sigma_v^2} \right)
\end{IEEEeqnarray}
with 
\begin{IEEEeqnarray}{rCl}\label{e: name of the equation}
\gamma=\rho\cdot\frac{2}{\alpha}
\left(\frac{\alpha\pi \lambda p_{\mathrm{tx}}\Gamma(1+{\frac{2}{\alpha}})}{\alpha-2}
\right)^{\frac{\alpha}{2}}
\end{IEEEeqnarray}
which is 
a constant that  depends on the network parameters. Noting that $p_{\rm{Z}}(i,\mathcal M_0^i)$ is identical for all nodes, the NBO routing function that optimizes \eqref{e: G(i,M) LB NBO} is given by 
\begin{IEEEeqnarray}{rCl}\label{e: NBO routing metric}
f_{\rm{NBO}}(\mathcal M_0)
&&=
\argmax_{\substack{
i\in \mathcal N_0 }}
m_{_{\rm{NBO}}}(i,\mathcal M_0^i)
\end{IEEEeqnarray}
where
the routing metric of the NBO  is
\begin{IEEEeqnarray}{rCl}\label{e: NBO routing metric exp}
m_{_{\rm{NBO}}}(i,\mathcal M_0^i)
\triangleq
r_{i,0}
\log_2\left(1+\gamma_b \cdot S_{i,0} \right)
\end{IEEEeqnarray}
and $\gamma_b=1/(\sigma_v^2+\gamma)$. 
In addition, this scheme coincides with the scheme proposed in \cite{Richter2014_IEEEI_SO}. While this scheme is not optimal as suggested in \cite{Richter2014_IEEEI_SO}, we will show in the following section that its performance  is quite close to that of the optimal scheme. Since the evaluation of the NBO routing metric is straightforward, we believe that this is indeed a good routing approach for practical networks.

 
\section{Numerical Results}\label{Section: Numerical Results}

In this section, we present simulation results that demonstrate
the efficiency of the proposed routing schemes, where the next hop is selected according to the SO, the  BO, the NSO  or the NBO routing functions in \eqref{e: opt metric SISO first}, \eqref{e: BO metric}, 
\eqref{e: SO metric narrow knowledge SISO} and \eqref{e: NBO routing metric}, respectively. In all the simulations we take the number of nodes to have a Poisson distribution with an average of $N_{\rm{nodes}}=300$. The nodes  are uniformly distributed in a disk with an area of size $\lambda/N_{\rm{nodes}}$, centered at the probe transmitting node. We also use the bias correction in \cite{Richter2014_PPP_analysis}. To simulate the common interference limited regime, we set $\sigma_v^2=0$. {} In this case, the transmitted power, $\rho$, and the node density, $\lambda$, have no effect on performance, and we set $\rho=1$ and $\lambda=1$.


\begin{figure}[t]
    \includegraphics[width=\FigureWidth]{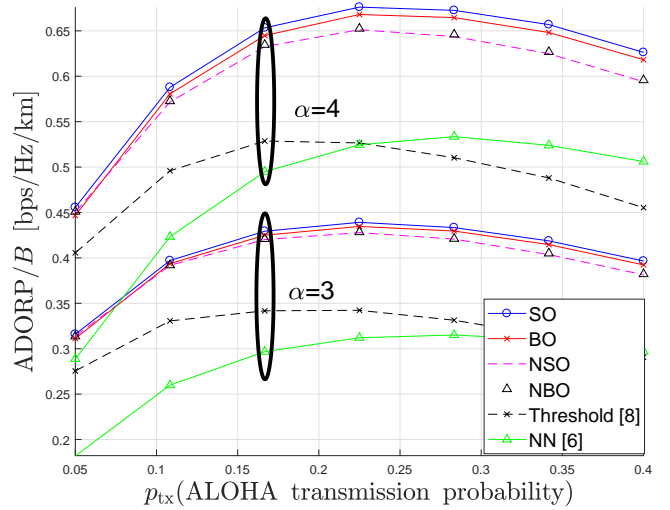}
    \caption{Normalized ADORP as a function of the ALOHA transmission probability for various routing
schemes,  $\bar N_A=30$ and $\alpha=3,4$.  }
    \label{fig: ADORP SISO alpha 3}
\end{figure}

To gain some insights into the characteristics of these routing schemes, we start by plotting the routing metrics for the two simpler schemes. Fig. \ref{fig: SISO NBO NSO} illustrates the NSO  and  the NBO routing metrics for $\alpha=4$. These schemes are based on narrow knowledge; hence, their routing metric for each node is solely a function of the distance and channel gain to this node. To further simplify the scenario, we consider a tested relay at a distance of  $r_{i,0}=1$, such that the metrics are only a function of the channel gain.   In the NSO  scheme, \eqref{e: NSO routing metric}, the resulting metric is the rate:  $q(S_{i,0})=\mathbb E\left\{\log_2\left(1+\frac{S_{i,0}}{J_{i,0}+\sigma_v^2} \right)
\Big|\mathcal M_0^i
\right\}
$, and the  aggregate interference can be evaluated by a  Monte-Carlo simulation. In the NBO  scheme, \eqref{e: NBO routing metric exp}, the resulting metric is the rate:  
$\log_2\left(1+\gamma_b \cdot S_{i,0} \right)$ (which is obviously much easier to evaluate than the NSO metric). 
\begin{figure}[t]
     \includegraphics[width=\FigureWidth]{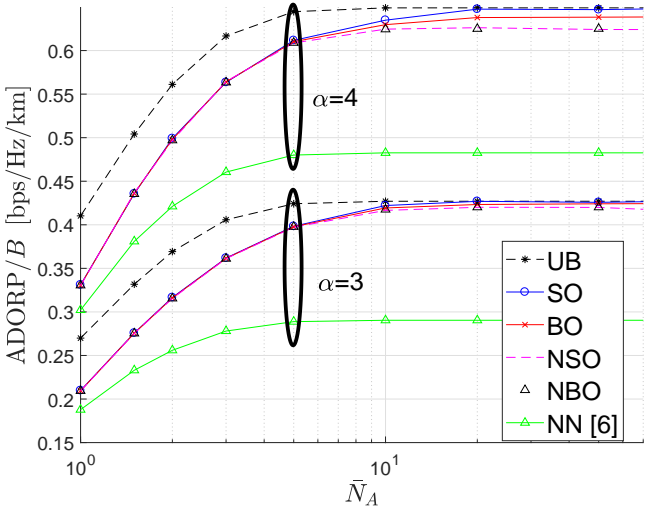}
    \caption{Normalized ADORP as a function of the average number of nodes within the routing zone, 
    $\bar N_A$,  for various routing
schemes,  $p_{\tx}=0.15$ and $\alpha=3,4$.  }
    \label{fig: ADORP_vs_rA}
\end{figure}

\par
The
difference between these two metrics is minimal; thus we expect their performance to be quite similar.
As expected, both metrics are very sensitive to changes in the channel gain when the channel gain is low. On the other hand, at very high channel gains the achievable rate increases very slowly, and hence the routing metrics are much less sensitive.
\par
The simpler NBO metric considers the interference as a constant, and  is less efficient than the NSO. 
Compared to the NSO metric,  the NBO metric gives higher weights to nodes with higher channel gains, $S_{i,0}$, and gives lower weights to nodes with lower channel gains. This difference leads to the small performance gap between NBO and NSO routing, as can be seen in Fig. \ref{fig: ADORP SISO alpha 3}. 
\par

 Fig. \ref{fig: ADORP SISO alpha 3} depicts the normalized ADORP (ADORP/$B$) as a function of the
ALOHA transmission probability, $p_{\mathrm{tx}}$, for a system with a path
loss exponent of $3$ and $4$.  All of the proposed schemes perform quite similarly, and achieve the maximum throughput near $p_{\tx}=0.2$. The SO curve  serves as an upper bound on  the achievable performance, based only on local knowledge. The SO routing function is  given in \eqref{e: opt metric SISO first} and the aggregate interference is evaluated by  Monte-Carlo simulations.  
\par
The BO scheme, \eqref{e: p_Z}-\eqref{e: BO metric}, enables a closed form evaluation of the routing metric, at only negligible loss of performance.
($1.2$\% at $\alpha=4$ and $1$\% at $\alpha=3$).
 However, its evaluation still requires high computational complexity.  
 On the other hand, the narrow knowledge schemes (NBO and NSO) are much simpler to evaluate although they incur a slightly larger performance gap ($3.8$\% at $\alpha=4$ and $2.6$\% at $\alpha=3$).    
The NBO is even simpler  to evaluate and its performance loss compared to the NSO curve is negligible. 
\par
For comparison, Fig. \ref{fig: ADORP SISO alpha 3} also depicts the performance of two geographic routing schemes from the literature: 
Nearest-Neighbor (NN) routing (e.g., \cite{haenggi2005routing,nardelli2012efficiency}), and Threshold routing \cite{weber2008longest}. 
The NN scheme
is ranked high among geographic routing schemes. Nevertheless it is inferior to all of our novel schemes.  It may be noted that  most of the  performance gain in our schemes comes from  the optimal combination of  geographic and channel state knowledge. 
\par
The Threshold routing scheme (which was the first published opportunistic relaying scheme) also takes advantage of the
CSI to outperform the most-progress-in-radius routing scheme
(e.g., \cite{nardelli2012efficiency}). This scheme is an adaptation of Weber et al.
\cite{weber2008longest} to the present setup. In this scheme, the transmitting
node first identifies the set of relays for which the power
of the desired signal is above a certain threshold\footnote{It may be noted that \cite{weber2008longest} used a comparison of the actual SINR to the threshold. But in our setup, the transmitting node does not know the interference power. Hence, the
transmitting node chooses the next relay (out of these relays)
that maximizes progress towards the destination. The threshold
value was chosen to maximize the throughout in each scenario.
However, the results show that the use of the CSI in the
threshold scheme is far from optimal, and this scheme is also
inferior to our novel schemes.}.
{{TABLE \ref{Table: Routing Characteristic} summarises the presented routing schemes and the differences among them.}}
\TwoOneColumnAlternate
{\begin{table} 
\begin{center} 
\begin{tabular}{ | p{3.1cm} | p{1.5cm} |  p{0.95cm} |  p{1.4cm}|} 
\hline 
\label{table: opt beta}
\uline{Routing Scheme} & \uline{Computational Complexity} 
& \uline{Use CSI} & \uline{Performance}  \\\hline
Statistically-Optimal (SO)   & High & Partial & Optimal \\\hline
Bound-Optimal (BO)  & Medium & Partial & Near optimal\\\hline
Narrow Knowledge Statistically-Optimal (NSO)    & Medium & Narrow & Near optimal \\\hline
Narrow Knowledge Bound-Optimal (NBO)   & Low & Narrow & Near optimal\\\hline
Threshold \cite{weber2008longest}   & Low & Narrow & Medium\\\hline
Nearest-Neighbor (NN) \cite{nardelli2012efficiency}& Low & Locations & Low \\ \hline
\end{tabular} 
\caption{Major characteristics of the compared routing schemes.}
\label{Table: Routing Characteristic}
\end{center}
\end{table}
} 
{\begin{table} 
\begin{center} 
\begin{tabular}{ | p{5cm} | p{3cm} |  p{1.8cm} |  p{2.3cm}|} 
\hline 
\label{table: opt beta}
\uline{Routing Scheme} & \uline{Computational Complexity} 
& \uline{Use CSI} & \uline{Performance}  \\\hline
Statistically-Optimal (SO)   & High & Partial & Optimal \\\hline
Bound-Optimal (BO)  & Medium & Partial & Near optimal\\\hline
Narrow Knowledge Statistically-Optimal (NSO)    & Medium & Narrow & Near optimal \\\hline
Narrow Knowledge Bound-Optimal (NBO)   & Low & Narrow & Near optimal\\\hline
Threshold \cite{weber2008longest}   & Low & Narrow & Medium\\\hline
Nearest-Neighbor (NN) \cite{nardelli2012efficiency}& Low & Locations & Low \\ \hline
\end{tabular} 
\caption{Major characteristics of the compared routing schemes.}
\label{Table: Routing Characteristic}
\end{center}
\end{table}
} 


\par
The gain from the use of local knowledge obviously depends on the radius of the routing zone, $r_A$. In order to use a scale-free variable, it is convenient to characterize the routing zone area in terms of the   average number of nodes within the routing zone,  $\bar N_A\triangleq\lambda\pi r_A^2$.  Fig. \ref{fig: ADORP SISO alpha 3} presents the  case  of $\bar N_A=30$.
Fig. \ref{fig: ADORP_vs_rA} presents the normalized ADORP as a function of the   average number of nodes within the routing zone for  $p_{\tx}=0.15$ and $\alpha=3,4$. 
\par
Fig. \ref{fig: ADORP_vs_rA} shows that  the performance of all the proposed routing schemes exhibits no  significant loss if  $\bar N_A$ is greater than $10$.
 For smaller routing zones, the gain decreases significantly. However, our schemes outperform the NN for any radius of the routing zone.  
\par
It is also crucial to  consider the probability that the routing zone is empty. In this case, none of the routing schemes will be able to transfer data, which will inevitably lead to a decrease in performance. To demonstrate this effect, 
 Fig. \ref{fig: ADORP_vs_rA} also depicts  an upper bound (UB) on the achievable performance. The UB curve is calculated as the maximum normalized ADORP of the SO curve multiplied by the probability to have a non-empty routing zone, $\mathbb P(A_R\text{ has at least single node})=1-\exp(-\bar N_A)$. 
\par
The upper bound helps us to differentiate between the two effects of increasing the size of the routing zone: better selection and better processing. For small routing zone sizes, each transmitting node has only a few nodes to select from. Hence, any increase in the size of this area can add a better relay and hence improve  performance. This effect is saturated around $\bar N_A=5$, and above that, the only gain is from the better prediction of the interference power. Furthermore, this prediction is generated solely by the SO and BO schemes, so they are the only ones to gain at high $\bar N_A$. 

\begin{figure}[t]
    \includegraphics[width=\FigureWidth]{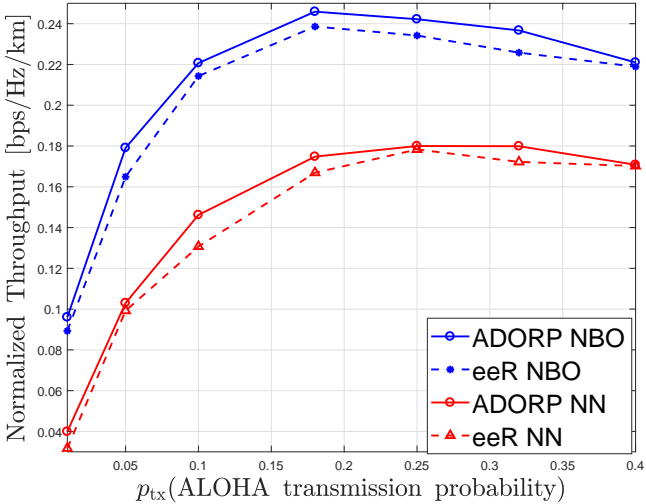}
    \caption{The density of the normalized end-to-end rate distance metric (eeR)  as a function of the ALOHA transmission probability in a full network simulation. The figure depicts the performance of the NBO and the NN routing
schemes, for a path-loss exponent of  $\alpha=3$. }
    \label{fig: E2E sim}
\end{figure}
To further demonstrate the advantage of the novel routing schemes, we also performed  a complete network simulation, in which messages are routed from sources to destinations according to the mechanism described in Subsection \ref{subsection: routing mechanism}. 
{{The simulation was implemented in Matlab, and the results are based on 1000 network realizations.}}     
The simulation includes an average of $100$ nodes uniformly distributed over a simulation area of $1000$m$^2$. 
We assume that each node has a home location and an  i.i.d. mobility  model (e.g., \cite{gong2013local}). 
The mobility follows a symmetric normal distribution with a variance of $2.84$m.  This mobility rate represents a low mobility that
is only sufficient to unravel the network bottlenecks when choosing a simulation length that depends on $p_{\tx}$ according to $t_s=10^5/p_{\tx}+8\cdot 10^5$  slots. Thus,
each simulation was run for a duration   of $t_s\cdot  T$ seconds, where  $T$ was  the duration of a time slot. Each message contained  $20BT$ bits where $B$ was the bandwidth. The messages were not subjected to any delay constraint.  
The performance was measured by the normalized density of end-to-end rate distance metric (eeR) \cite{Richter2014_IEEEI_SO}, where we summed the distance-bits product for all successful messages, and divided by the size of the area-time and by the bandwidth. The normalized eeR metric is given by \cite[Eq. (9)]{Richter2014_IEEEI_SO}:
\begin{IEEEeqnarray}{rCl}\label{e: eq name}
\mathrm{eeR}\triangleq \frac{1}{ T_{\mathrm{T}} A B}\sum_\ell L_\ell K \cdot i_\ell
\end{IEEEeqnarray}
where $A$ is the simulation area, $T_\mathrm{T}=t_s\cdot  T$ is the simulation time, $K=20BT$ is the number of bits per message, $i_\ell$ is the successful delivery indicator and $L_\ell$ is the distance between source and destination at the time of message generation. 
Using the ergodic rate approach, a message is assumed to be successfully decoded if it accumulates a mutual information that is equal or larger than $K$ in each of its hops.
\par
Fig. \ref{fig: E2E sim} depicts the normalized eeR as a function of the
ALOHA transmission probability, $p_{\mathrm{tx}}$, for a system with a path
loss exponent of $3$. 
  The figure shows the eeR performance of the NN
scheme and the NBO scheme. The figure also shows the relevant normalized ADORP for each scheme. As can be  seen,  the end-to-end performance indeed converges to the normalized  ADORP. Also, as expected from the previous
results, the NBO scheme significantly outperforms the NN
scheme. The gains vary between $30\%$ to $180\%$.
\par
 Fig. \ref{fig: ADORP vs SNR} depicts the normalized ADORP  as a function of the transmission power for various routing
schemes,  $p_{\tx}=0.15$ and $\alpha=3$. The transmission power is depicted through $\overline{\mathrm{SNR}}_1=\rho/\sigma_v^2$, the averaged SNR for a receiver at $1$ unit distance. As expected, for small SNRs, the normalized ADORP increases linearly  with $\overline{\mathrm{SNR}}_1$. For $\overline{\mathrm{SNR}}_1$ above $-20$dB, the noise becomes negligible, and the networks operate at the interference limited regime. For all SNR values, the proposed schemes outperforms the NN  scheme. 
For very small SNR, the NN becomes close to optimal. 

\begin{figure}[t]
        \includegraphics[width=\FigureWidth]{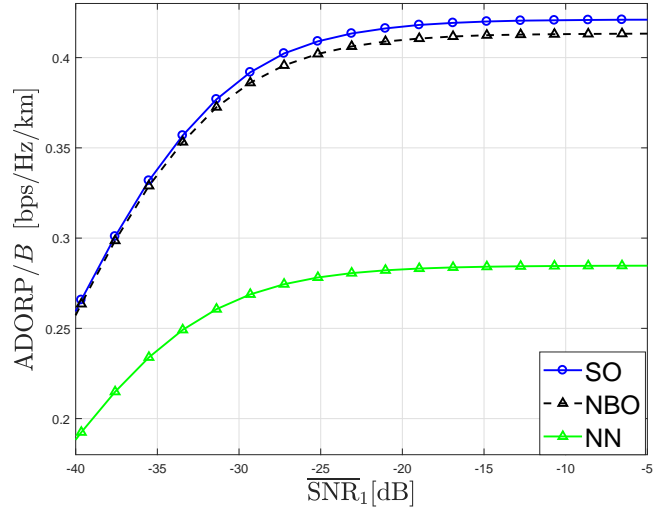}
    \caption{Normalized ADORP as a function of the transmission power, for various routing
schemes,  $p_{\tx}=0.15$ and $\alpha=3$. The transmission power is depicted through $\overline{\mathrm{SNR}}_1=\rho/\sigma_v^2$, the averaged SNR for a receiver at $1$ unit distance.}
    \label{fig: ADORP vs SNR}
\end{figure}
\section{Conclusions\label{Section: Conclusions}}
In this paper, we proposed novel routing schemes for
WANETs, where the node locations are represented by a  PPP. We focus on geographic routing schemes, where  the routing decision at each node is based solely on local knowledge on the nodes  within its routing zone (geographical locations
and channel gains).  
\par
\par 
In this setup, we were able to formulate the routing objectives as an optimization problem, and to present the routing scheme that solves this optimization problem. The resulting routing scheme requires high computational complexity. Nevertheless, the presentation of the optimal routing scheme is important as it gives an upper bound on the performance of any other routing scheme. 
\par
The knowledge of the maximal routing performance also allowed us  to  present and evaluate  three
suboptimal low-complexity routing schemes. These schemes were derived by using only 
part of the available knowledge, by taking a simple lower bound or by combining these two methods. The performance of all three schemes is very close to  the performance of the optimal scheme (and in particular
at low transmission probability, where the performance
gap tends toward zero). Furthermore, the performance of all the proposed routing
schemes outperforms the performance of previously published routing schemes.
\par

This paper considers the fast mobility model. 
While this model is frequently used in stochastic geometry analysis, most practical networks have much tighter delay constraints. Future work should consider a more realistic scenario where messages must be transferred before the network topology changes. 
In this case, some message buffers in the network may be empty, and the network performance is dominated by its bottlenecks.  
Hence, the choice of relay must take  the state of the buffer, the desired directions of messages in the buffer and the load in each of the potential relays into account. 
In particular, one can consider an improved routing scheme in which  the routing metric suggested above is modified to give proper weights to all of these factors.  Such a metric might be able to trade off between the optimization of network throughput and the minimization of the message delay.
Future work  could also  consider   the use of multiple antennas at each node.   


 


\begin{appendices} 
\section{Proof of Lemma \ref{Theorem: lemma G(i,M) lower bound}
 \label{Appendix: proof of lemma BO}}
The lower bound is based on the use of  the law of total expectation, while conditioning on the event that the threshold zone (with radius of $r_{\rm{Z}}$) is free of interferers. 
Denote by $d_i $  the distance between the tested relay (node $i$)  and its nearest interfering node.  Equation 
\eqref{e: G(i,k)} can be written as: 
\begin{IEEEeqnarray}{rCl}\label{e: G(f,K) LB}
 \TwoOneColumnAlternate{&G(&i,\mathcal M_0)  }
 {G(i,\mathcal M_0)} 
 \TwoOneColumnAlternate{\\&=&}{&=&} 
(1-p_{\rm{Z}}(i,\mathcal M_0)) \mathbb E\Big\{
r_{i,0} \log_2
(1+\tfrac{S_{i,0}}{J_{i,0}+\sigma_v^2}
)\Big|d_i\le r_{\rm{Z}},\mathcal M_0\Big\}
\TwoOneColumnAlternate{\notag \\}{\\} 
&&+p_{\rm{Z}}(i,\mathcal M_0) \mathbb E\Big\{
r_{i,0} \log_2
\Big(1+\tfrac{ S_{i,0}}{J_{i,0}+\sigma_v^2}
\Big)\Big|d_i> r_{\rm{Z}},\mathcal M_0\Big\}
\notag \\
&\ge&%
p_{\rm{Z}}(i,\mathcal M_0) \mathbb E\Big\{
r_{i,0} \log_2
\Big(1+\tfrac{ S_{i,0}}{J_{i,0}+\sigma_v^2}
\Big)\Big|d_i> r_{\rm{Z}},\mathcal M_0\Big\}
\notag \\
&\ge&
p_{\rm{Z}} (i,\mathcal M_0)
 \cdot r_{i,0} \log_2
\left(1+\frac{S_{i,0}}{\mathbb E\{J_{i,0}|d_i> r_{\rm{Z}},\mathcal M_0\}+\sigma_v^2}
\right) \notag
\end{IEEEeqnarray}
where  $ p_{\rm{Z}}(i,\mathcal M_0)\triangleq \mathbb P(d_i>r_{\rm{Z}}\big|\mathcal M_0)$ is evaluated in Appendix \ref{Appendix: p_Z}, and the last line uses the Jensen inequality. This bound holds for any choice of the threshold radius, $r_{\rm{Z}}$. A useful value for this radius was shown to be \cite{Richter2014_PPP_analysis}: $r_{\rm{Z}}=\sqrt{\tfrac{\alpha-2}{\alpha \pi p_{\mathrm{tx}} \lambda}}$. The rest of the proof focuses on the evaluation of the conditional expectation of the aggregate interference.

We denote by  $J_{i,0}(A)$  the aggregate interference at the $i$-th node which is induced only from the transmitting nodes within  area $A$.   We consider the division of the plane into the non-overlapping areas  $A_{\mathrm{Z},i}$, $(\overline{A_{\mathrm{Z},i}} \cap A_{\mathrm{R}})$ and 
$(\overline{A_{\mathrm{Z},i}} \cap\overline{A_{\mathrm{R}}})$, where  $\overline {A}$ denotes the area outside $A$ (see Fig. \ref{fig: Routing Zone General} for an illustration). As we condition on $d_i> r_{\rm{Z}}$, we have:
\begin{IEEEeqnarray}{rCl}\label{e: I1+I2+I3}
\mathbb E\{J_{i,0}|d_i> r_{\rm{Z}},\mathcal M_0\}
&=& 
\mathbb E\{J_{i,0}(\overline{A_{\mathrm{Z},i}} \cap A_{\mathrm{R}})|\mathcal M_0\}
\\ \notag &&+
\mathbb E\{J_{i,0}(\overline{A_{\mathrm{Z},i}} \cap\overline{A_{\mathrm{R}}} )|\mathcal M_0\} .
\end{IEEEeqnarray}
Thus, we define
\begin{IEEEeqnarray}{rCl}
\bar J_1^i&\triangleq & \mathbb E\left\{J_{i,0}(\overline{A_{\rm{Z},i}} \cap A_{\mathrm{R}})|\mathcal M_0 \right\}
\label{I_1} \\
\bar J_2^i&\triangleq&\mathbb E\left\{ J_{i,0}(\overline{A_{\rm{Z},i}} \cap \overline{A_{\mathrm{R}}})|\mathcal M_0 \right\}
\label{I_2} 
\end{IEEEeqnarray}
and evaluate each expectation separately.

Writing the expectation in \eqref{I_1} more explicitly,  we consider $J_1^i$ to be the aggregate interference at the next relay, which is induced by  the transmitting nodes within $(\overline{A_{\rm{Z},i}} \cap A_{\mathrm{R}})$: 
\begin{IEEEeqnarray}{rCl}\label{e: J1 origin}
&& J_1^i =
\sum\limits_{\substack{ \ell\in \mathcal N_0\\ 
\ell:\| \mathbf r_{i}-\mathbf r_{\ell} \| > 
\sqrt{\tfrac{\alpha-2}{\alpha\pi \lambda p_{\mathrm{tx}}}}}}
g_\ell\rho\cdot 
r_{i,\ell}^{-\alpha}
W_{i,\ell},
\IEEEeqnarraynumspace
\end{IEEEeqnarray}
where $g_\ell$ is an indicator function which equals  $1$ if the $\ell$-th node is scheduled to  serve as a transmitting node, and $\mathbb E\{g_\ell\}=p_{\tx}$.
The expectation of \eqref{e: J1 origin}, $\bar J_1^i$, can be evaluated by using the statistical independence of transmission decisions:
\begin{IEEEeqnarray}{rCl}\label{e: J1 re origin}
\bar J_1^i
&=&
\sum\limits_{\substack{ \ell\in \mathcal N_0\\ 
\ell:\| \mathbf r_{i}-\mathbf r_{\ell} \| > 
\sqrt{\tfrac{\alpha-2}{\alpha\pi \lambda p_{\mathrm{tx}}}}}}
p_{\mathrm{tx}}\rho\cdot \| \mathbf r_{i}-\mathbf r_{\ell} \|^{-\alpha} .
\end{IEEEeqnarray}

The evaluation of   $\bar J_2^i$ requires a cumbersome analysis of the shape of the area which is seen by the $i$-th node, and its evaluation is given in Appendix \ref{Appendix: calculation of aggregate intrf}. 
   
\section{Evaluation of $p_{\rm{Z}}(i,\mathcal M_0)$\label{Appendix: p_Z}}
In this appendix we evaluate the probability that there is no interfering transmitting node within the threshold zone, $p_{\rm{Z}}(i,\mathcal M_0)$, which is given in \eqref{e: p_Z}.
Given $\mathcal M_0$, the probability 
$p_{\rm{Z}}(i,\mathcal M_0)$  depends on the number of nodes in the routing zone that lie in the threshold zone of  node $i$.  Recall that node $j$ lies in the threshold zone of node $i$ if 
$\|\mathbf r_j-\mathbf r_i \|\le r_{\rm{Z}}$.
The evaluation of $p_{\rm{Z}}(i,\mathcal M_0)$ considers two cases:

\subsubsection
{$A_{\mathrm{Z},i}$ is located completely within $A_{\mathrm{R}}$}
In this case, the probe knows the locations of all nodes within the threshold zone, and can compute the  number of nodes within the threshold zone:  \begin{IEEEeqnarray}{rCl}\label{e:}
N_{{\rm{Z}},i}
=
\sum_{\ell\in \mathcal N_0}\mathbf 1_{\{\|\mathbf r_i-\mathbf r_\ell \|\le r_{\rm{Z}} \}}
.
\end{IEEEeqnarray}
The probability that all nodes are receiving nodes is given by $(1-p_{\mathrm{tx}})^{N_{{\rm{Z}},i}}$,
 which is  the first case in \eqref{e: p_Z}.
\subsubsection
{Part of $A_{\mathrm{Z},i}$ is not located  within $A_{\mathrm{R}}$}
 In this case $A_{\mathrm{Z},i}\cap A_{\mathrm{R}}$ is not necessarily empty, and we need to calculate the probability that a transmitting node is located within  $A_{\mathrm{Z},i}$ but outside the routing zone,  $A_{\mathrm{R}}$. The area of the threshold zone outside the routing zone is $B_{{\rm{T}},i}\triangleq|\overline{A_{\mathrm{R}}}\cap A_{\mathrm{Z},i}|$. The local knowledge of the probe transmitting node contains no knowledge of the nodes in the area $B_{{\rm{T}},i}$, and we only know that they are characterized by a PPP with density  $p_{\mathrm{tx}}\lambda$. The size of $B_{{\rm{T}},i}$ is given in (\ref{e: B1 by integral}). Thus, the probability that this area has no  transmitting nodes is given by $e^{-\lambda p_{\mathrm{tx}} B_{{\rm T},i}}$. 
Multiplying also by the probability that there are no transmitting nodes  in $A_{\mathrm{R}}\cap A_{\mathrm{Z},i}$,
 results in the second case in \eqref{e: p_Z}.

\section{Evaluation of $\bar J_2^i$ 
\label{Appendix: calculation of aggregate intrf}}
In this appendix we evaluate $\bar J_2^i$, the average interference power from the area $\overline{A_{\rm{Z},i}} \cap \overline{A_{\mathrm{R}}}$ given $\mathcal M_0$. The final result of this appendix is given in \eqref{e: J_2^i general}.
For this calculation, it is convenient to set the axis system so that the origin is located at the tested next relay and the probe transmitting node is located at $x=-\|\mathbf  r_{i}\|$, $y=0$.
Thus, the circle that contains the routing zone, $A_{\mathrm{R}}$, is given by
\begin{IEEEeqnarray}{rCl}
\label{e: Circle probe transmitter}
(x+\|\mathbf  r_{i}\|)^2+y^2 &=& r_{\rm{A}}^2 .
\end{IEEEeqnarray}

Converting to polar coordinates ($x=r\cos (\theta)$ and $y=r\sin (\theta)$), we get 
\begin{IEEEeqnarray}{rCl} \label{e: Circle probe transmitter polar}
r^2+2 r\cos(\theta)\cdot \|\mathbf  r_{i}\|+\|\mathbf  r_{i}\|^2 &=& r_{\rm{A}}^2     .
\end{IEEEeqnarray}
Hence, the edge of routing zone can be described by
\begin{IEEEeqnarray}{rCl}\label{e: r_0}
r_0(\theta)\triangleq
-
\|\mathbf  r_{i}\|\cos(\theta)+\sqrt{r_{\rm{A}}^2 - \|\mathbf  r_{i}\|^2\sin^2(\theta)}
\end{IEEEeqnarray}
and the area can be evaluated by an integral within this area.

\begin{figure}[t]
    \center
    \includegraphics[scale=0.35,trim={1cm 4.cm 0cm 4.1cm},clip=true]{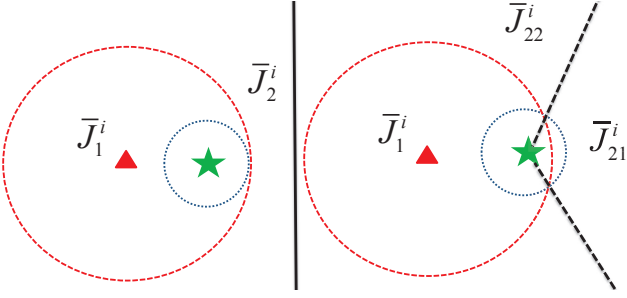}
    \caption{ The dashed circle represents the routing zone, $A_{\mathrm{R}}$, centered at the probe transmitting node.  The threshold zone of node $i$, $A_{\mathrm{Z},i}$, is marked by the small yellow dotted circle around the tested relay. We  evaluate $\bar J_1^i$ and $\bar J_2^i$ which are defined in \eqref{I_1} and \eqref{I_2}, respectively. We  distinguish between two cases: the left side considers  that  $A_{\mathrm{Z},i}$ is located completely within $A_{\mathrm{R}}$. The right side considers  that part of  $A_{\mathrm{Z},i}$ is located outside $A_{\mathrm{R}}$, and thus we use  $\bar J_2^i=\bar J_{21}^i+\bar J_{22}^i$. The cone width of $\bar J_{21}^i$ is $2\theta_s$. These two cases are used in Appendix \ref{Appendix: calculation of aggregate intrf} in the evaluation of $\bar J_1^i$ and $\bar J_2^i$.}
    \label{fig: J1_J2}
\end{figure}

The characteristic function of  $\bar J_2^i$ (the expectation of the aggregate interference at the next relay which is induced by  the transmitting nodes within 
$\overline{A_{\mathrm{Z},i}} \cap\overline{A_{\mathrm{R}}}$) is given by \cite{venkataraman2006shot}: 
\begin{IEEEeqnarray}{rCl}\label{e: eq name}
\Phi_{J_{i,0}}(s) 
\TwoOneColumnAlternate{&&\\ \notag=}{=}
\exp &\Big(&-
2\pi\lambda p_{\tx}
\mathbb E \Big\{
\int\int_{\delta(\theta)}^{\infty}
(1-e^{-s\rho \lambda  p_{\tx} W r^{-\alpha}}r dr d\theta)
\Big\}\Big) 
\end{IEEEeqnarray}
where $\delta(\theta)$ defines the edge of the interference free zone. The expectation over $J_{i,0}$ can be evaluated by substituting $s=0$ into the derivative of $\Phi_{J_{i,0}}(s)$ (see for example \cite{Richter2014_PPP_analysis}). However, we need to distinguish between two cases (See Fig. \ref{fig: J1_J2}): 
 in the first (left) case, the threshold zone is located completely within the routing zone (i.e.,  $r_{\rm{Z}}+\|\mathbf r_{i}\|\le r_{\rm{A}}$).  In the second (right) case,   part of the threshold zone is located outside  the routing zone (i.e., $r_{\rm{Z}}+\|\mathbf r_{i}\|>r_{\rm{A}}$). 

The first case is simpler, and is characterized by $\delta(\theta)=r_0(\theta)$, which results in
\begin{IEEEeqnarray}{rCl}\label{e: J_2 first case}
\bar J_2^i&=&\frac{\partial\Phi_{J_2^i}(s)}{\partial s} \Big|_{s=0}
=
\rho\lambda p_{\tx} 
\int_0^{2\pi}
\int_{r_0(\theta)}^{\infty}
r^{1-\alpha}
dr
d\theta
\notag \\
&=&
\frac{
\rho\lambda p_{\mathrm{tx}}}{2-\alpha}\Big[
\int_{0}^{2\pi}
\Big(
-\|\mathbf  r_{i}\|\cos(\theta)
\TwoOneColumnAlternate{\\ \notag &&}{} 
+\sqrt{r_{\rm{A}}^2 - \|\mathbf  r_{i}\|^2\sin^2(\theta)}
\Big)^{2-\alpha}
d\theta\Big].
\end{IEEEeqnarray}

In the second case, we start by computing the angle of  intersection points between the threshold zone (which is given by $r^2=r_{\rm{Z}}^2$) and the routing zone. 
Substituting in \eqref{e: Circle probe transmitter polar}, we get
\begin{IEEEeqnarray}{rCl}\label{e: intersection points}
\cos(\theta)&=&
\frac{r_{\rm{A}}^2-r_{\rm{Z}}^2-\|\mathbf  r_{i}\|^2}{2r_{\rm{Z}}\|\mathbf  r_{i}\|}.
 \IEEEeqnarraynumspace
\end{IEEEeqnarray}
Denoting the angle of  intersection point by $\theta_s\triangleq \cos^{-1}\big(\frac{r_{\rm{A}}^2-r_{\rm{Z}}^2-\|\mathbf  r_{i}\|^2}{2r_{\rm{Z}}\|\mathbf  r_{i}\|}\big)$
we get
\begin{IEEEeqnarray}{rCl}\label{e: eq name}
\delta(\theta)=\begin{cases}
R_Z, & |\theta|<\theta_s \\
r_0(\theta), & \mathrm{otherwise}
\end{cases}.
\end{IEEEeqnarray}

Thus,  
  we  divide the integral into two parts, $\bar J_2^i=\bar J_{21}^i+\bar J_{22}^i$, where
\begin{IEEEeqnarray}{rCl}\label{e: J_21}
\bar J_{21}^i&=&
\rho\lambda p_{\tx} 
\int_{-\theta_s}^{\theta_s}
\int_{r_{\rm{Z}}}^{\infty}
r^{1-\alpha}
dr
d\theta
\TwoOneColumnAlternate{\\ \notag &=&}{=} 
\frac{2\theta_s
\rho\lambda p_{\mathrm{tx}} r_{\rm{Z}}^{2-\alpha}}{\alpha-2},
\IEEEeqnarraynumspace
\end{IEEEeqnarray}
and 
\begin{IEEEeqnarray}{rCl}\label{e: J_22}
\bar J_{22}^i&=&
\rho\lambda p_{\mathrm{tx}}  
\int_{\theta_s}^{2\pi-\theta_s}
\int_{r_0(\theta)}^{\infty}
r^{1-\alpha}
dr
d\theta
\notag \\
&=&
\frac{
\rho\lambda p_{\mathrm{tx}} }{\alpha-2}\Big[
\int_{\theta_s}^{2\pi-\theta_s}
\Big(-\|\mathbf  r_{i}\|\cos(\theta)
\TwoOneColumnAlternate{\\ \notag &&}{} 
+\sqrt{r_{\rm{A}}^2 - \|\mathbf  r_{i}\|^2\sin^2(\theta)}
\Big)^{2-\alpha}
d\theta\Big].
\IEEEeqnarraynumspace
\end{IEEEeqnarray}
It may be noted that \eqref{e: J_2 first case} coincides with \eqref{e: J_22}  if we set  $\theta_s=0$. Thus, the two different cases can be jointly summarized by Equations \eqref{e: I1}-\eqref{e: theta_s lemma}.
\par
Unfortunately, the calculations of the integrals in \eqref{e: J_2 first case} and \eqref{e: J_22} only have closed form solutions   for integer values of  $\alpha> 2$. However,  in the general case, for other non-integer values where $\alpha>2$, these integrals can be evaluated  numerically.
For example, for  $\alpha=3$ we get
\begin{IEEEeqnarray}{rCl}\label{e: a=3}
\bar J_{22}^i
&=&
\rho\lambda p_{\tx} 
\left[-
\frac{2\|\mathbf  r_{i}\|\sin(\theta_s)}{r_{\rm{A}}^2-\|\mathbf  r_{i}\|^2}\right.
\\ &&+\Big(E(2\pi-\theta_s|\tfrac{\|\mathbf r_i\|^2}{r_{\rm{A}}^2})-E(\theta_s|\tfrac{\|\mathbf r_i\|^2}{r_{\rm{A}}^2})\Big) \notag
\TwoOneColumnAlternate{\\ \notag && }{} 
\cdot  \left.
\frac{r_{\rm{A}}^2 \sqrt{\frac{\|\mathbf r_i\|^2\cos(2\theta_s)}{r_{\rm{A}}^2}
-\frac{\|\mathbf r_i\|^2}{r_{\rm{A}}^2}+2}}
{(r_{\rm{A}}^2-\|\mathbf r_i\|^2)\sqrt{2r_{\rm{A}}^2+\|\mathbf r_i\|^2\cos(2\theta_s)-\|\mathbf r_i\|^2}}
\right]
\IEEEeqnarraynumspace
\end{IEEEeqnarray}
where  $E(\phi|k)\triangleq \int_0^{\phi}\sqrt{1 - k^2\sin^2(\theta)}d\theta$ is the Elliptic Integral of the second kind.
For $\alpha=4$ we get  
\begin{IEEEeqnarray}{rCl}\label{e: a=4}
\bar  J_{22}^i
&=&
\frac{\rho\lambda p_{\mathrm{tx}}}{4(r_{\rm{A}}^2-\|\mathbf r_i\|^2)^2}
\Big[
-2\|\mathbf  r_{i}\|\sin(\theta_s)
\TwoOneColumnAlternate{\\ \notag &&}{} 
\cdot \sqrt{4r_{\rm{A}}^2+2\|\mathbf r_i\|^2\cos(2\theta_s)-2\|\mathbf r_i\|^2}
\TwoOneColumnAlternate{\notag }{}
\\
&&-
2 r_{\rm{A}}^2\tan^{-1}\Big(\tfrac{\sqrt{2}\|\mathbf  r_{i}\|\sin(\theta_s)}{\sqrt{2r_{\rm{A}}^2+\|\mathbf r_i\|^2\cos(2\theta_s)-\|\mathbf r_i\|^2}}\Big)
\TwoOneColumnAlternate{\notag\\  &&}{} 
+4 r_{\rm{A}}^2(\pi-\theta_s)-2\|\mathbf r_i\|^2\sin(2\theta_s)
\Big]
. \notag 
\IEEEeqnarraynumspace
\end{IEEEeqnarray}

\end{appendices} 
\bibliographystyle{IEEEtran}    
\bibliography{Routing_SISO_bib}

\end{document}